\title{\MakeUppercase{Power-Dominance in Estimation Theory: A Third Pathological Axis}}
\name{S.~S.~Krishna~Chaitanya~Bulusu$^{\dagger}$,~and~Mikko~J.~Sillanpää$^\ddagger$
}
\address{$\dagger$~Centre for Wireless Communications (CWC),\quad$\ddagger$~Department of Mathematical Sciences (DMS),\\University~of~Oulu,~Oulu,~FI-90014,~Finland.}
\acrodef{3D}{three-dimensional}
\acrodef{3GPP}{3rd generation partnership project}
\acrodef{4G}{fourth generation}
\acrodef{5G}{fifth generation}
\acrodef{B5G}{5G and beyond}
\acrodef{6G}{sixth generation}
\acrodef{DPD}{digital predistortion}
\acrodef{FC}{Fast convergence}
\acrodef{IB}{in-band}
\acrodef{ICE}{iterative compensation of error}
\acrodef{MER}{modulation error ratio}
\acrodef{NL}{nonlinear}
\acrodef{OFDM}{orthogonal frequency division multiplexing}
\acrodef{OOB}{out-of-band}
\acrodef{PA} {power amplifier}
\acrodef{PSD}{power spectral density}
\acrodef{PW-DPD}{piece-wise DPD}
\acrodef{RF} {radio frequency}
\acrodef{DVB-T2} {second generation Digital Video Broadcast-Territorial}
\acrodef{ATSC 3.0} {American next generation DTT}
\acrodef{ICE} {iterative compensation of error}
\acrodef{ILA} {indirect learning architecture}
\acrodef{FCJO} {FC joint optimization}
\acrodef{PAPR} {peak-to-average power ratio}
\acrodef{PW} {piecewise}
\acrodef{IFFT} {inverse fast Fourier transform}
\acrodef{FFT} {fast Fourier transform}
\acrodef{MP} {memory polynomial}
\acrodef{GMP} {generalized memory polynomial}
\acrodef{LUT} {look-up table}
\acrodef{ML}{machine learning}
\acrodef{OBO}{output backoff}
\acrodef{ARB}{arbitrary waveform generator}
\acrodef{NR}{new radio}
\acrodef{QAM}{quadrature amplitude modulation}
\acrodef{FLOPs}{floating point operations}
\acrodef{DSP}{digital signal processing}
\acrodef{IBO}{input back-off}
\acrodef{OBO}{output back-off}
\acrodef{VSA}{Vector Signal Analyzer}
\acrodef{NMSE}{normalized mean-squared error}
\acrodef{ILA}{indirect learning architecture}
\acrodef{mmW} {millimeter wave}
\acrodef{kNN} {k-nearest neighbors}
\acrodef{LS} {least-squares}
\acrodef{rms} {root mean square}
\acrodef{GLA} {generalized Lloyd's algorithm}
\acrodef{ITU} {International Telecommunication Union}
\acrodef{VS-GMP} {vector-switched generalized memory polynomial}
\acrodef{SVS-GMP} {statistical VS-GMP}
\acrodef{I} {in-phase}
\acrodef{Q} {quadrature}
\acrodef{ACLR}{adjacent channel leakage ratio}
\acrodef{ACPR}{adjacent channel power ratio}
\acrodef{AM/AM}{amplitude-to-amplitude}
\acrodef{AM/PM}{amplitude-to-phase}
\acrodef{AI}{artificial intelligence}
\acrodef{ANN}{artificial neural network}
\acrodef{LM}{levenberg–marquardt}
\acrodef{ReLU}{rectified linear unit}
\acrodef{Tanh}{tangent sigmoid}
\acrodef{ELU}{exponential linear unit}
\acrodef{GELU}{gaussian error linear unit}
\acrodef{CDF}{cumulative distribution function}
\acrodef{CCDF}{complementary cumulative distribution function}
\acrodef{CCRR}{computational complexity reduction ratio}
\acrodef{DT} {decision tree}
\acrodef{DT-GMP} {decision tree-based generalized memory polynomial}
\acrodef{DL}{deep learning}
\acrodef{DUT} {device under test}
\acrodef{DLA}{direct learning architecture}
\acrodef{DTT}{digital terrestrial television}
\acrodef{DDR}{dynamic deviation reduction}
\acrodef{EMP}{envelope memory polynomial}
\acrodef{EVM} {error vector magnitude}
\acrodef{FR1}{frequency range 1}
\acrodef{FR2}{frequency range 2}
\acrodef{FLOP}{floating point operation}
\acrodef{LTE}{long term evolution}
\acrodef{LTI}{linear time-invariant}
\acrodef{MLP} {multilayer perceptron}
\acrodef{MSE} {mean-squared error}
\acrodef{NN} {neural network}
\acrodef{RNN} {recurrent NN}
\acrodef{CNN} {convolutional NN}
\acrodef{STD}{standard deviation}
\acrodef{RVTDNN} {real-valued time-delay NN}
\acrodef{R2TDNN} {residual real-valued time-delay NN}
\acrodef{ARVTDNN} {augmented real-valued time-delay NN}
\acrodef{SRTDNN} {simplified real-valued time-delay NN}
\acrodef{RVTDCNN} {real-valued time-delay convolutional NN}
\acrodef{LSTM} {long short-term memory}
\acrodef{TDNN} {time-delay neural network}
\acrodef{$I/Q$} {in-phase and quadrature}
\acrodef{ResNet} {residual neural network}
\acrodef{MMSE} {minimum mean-squared error}
\acrodef{MVUE} {minimum variance unbiased estimate}
\acrodef{LTI} {linear time-invariant}
\acrodef{BIBO} {bounded input and bounded output}
\acrodef{LHS} {left-hand side}
\acrodef{RHS} {right-hand side}
\acrodef{MI} {memory index}
\acrodef{CRLB} {Cramér–Rao lower bound}
\acrodef{NARMA} {nonlinear auto-regressive moving average}
\acrodef{GOD} {gap-induced orthogonality delimitation}
\acrodef{i.i.d.} {independent and identically distributed}
\acrodef{PIML} {physics-informed machine learning}
\acrodef{ROC} {receiver operating characteristic}
\newtheorem{theorem}{Theorem}[section]
\newtheorem{corollary}{Corollary}[theorem]
\newtheorem{definition}{Definition}
\begin{document}
%
\maketitle
\begin{center}
 \small~\copyright~2025 IEEE. Personal use of this material is permitted. Permission from IEEE must be obtained for all other uses, in any current or future media, including reprinting/republishing this material for advertising or promotional purposes, creating new collective works, for resale or redistribution to servers or lists, or reuse of any copyrighted component of this work in other works.
\end{center}
\begin{abstract}
This paper introduces a novel framework for estimation theory by introducing a second-order diagnostic for estimator design. While classical analysis focuses on the bias-variance trade-off, we present a more foundational constraint. This result is model-agnostic, domain-agnostic, and is valid for both parametric and non-parametric problems, Bayesian and frequentist frameworks. We propose to classify the estimators into three primary power regimes. We theoretically establish that any estimator operating in the `power-dominant regime' incurs an unavoidable mean-squared error penalty, making it structurally prone to sub-optimal performance. We propose a `safe-zone law' and make this diagnostic intuitive through two safe-zone maps. One map is a geometric visualization analogous to a receiver operating characteristic curve for estimators, and the other map shows that the safe-zone corresponds to a bounded optimization problem, while the forbidden `power-dominant zone' represents an unbounded optimization landscape. This framework reframes estimator design as a path optimization problem, providing new theoretical underpinnings for regularization and inspiring novel design philosophies.
\end{abstract}
\begin{keywords}
Bias-variance-power triad, energy geometry, estimation theory, mean-squared error (MSE), optimization, orthogonality, power dominance, regularization, safe-zone map, second-order diagnostic, signal processing.
\end{keywords}
\vspace{-0.5em}
\section{Introduction}
\label{sec:intro}
The \ac{MSE} serves as the primary metric in estimation theory for evaluating the performance of an estimator of a true signal \cite{K93, VT68}. The bias-variance decomposition assesses the fundamental trade-off between systematic error and statistical fluctuation. This trade-off provides a profound and widely adopted lens for analyzing and optimizing estimators across many domains \cite{Hastie09}. However, it may not always reveal structural flaws in an estimator's design \cite{D20}.

This paper presents a new diagnostic and design framework that introduces a third axis of analysis, focusing on the estimator's mean power relative to the true signal. Notwithstanding the bias-variance trade-off, we show that this simple metric reveals a fundamental constraint on estimator performance. Our work exposes a previously unrecognized structural pathology in estimators. In classical estimation theory, an estimator is usually interpreted geometrically within the geometry of an inner product space (refer \cite{K93}, ch. 4, pp. 384–389). We prove that any estimator operating in the `\textit{power-dominant}' regime incurs an unavoidable MSE penalty, making it fundamentally prone to sub-optimal performance. Then, we propose a more practical, diagnostic-based approach to estimator design. 

This paper is organized as follows: Section~\ref{sec:power regime classification and penalty law} introduces the power regimes and proves the power-dominance penalty theorem. Section~\ref{sec:static diagnostic to dynamic design philosophy} presents a dynamic design philosophy based on scaling control. Section~\ref{sec:safe zone maps} presents two safe-zone maps. Section~\ref{sec:relation to prior work} link our work to the prior literature in and conclusions are provided in Section~\ref{sec:conclusion}.
\vspace{-0.5em}
\section{Power Regime Classification System and Power-Dominance Penalty}\label{sec:power regime classification and penalty law}
We introduce a penalty theorem based on a novel classification based on power regimes in this section. Based on this, we establish a fundamental limitation in estimation theory through a geometric interpretation via inner products. This framework reveals a previously unrecognized constraint on estimation performance and provides a new diagnostic for estimator design. 
Previous work has largely focused on bias and variance \cite{K93,VT68}. Our result exposes a hidden trade-off in estimation theory directly linked to power dominance, adding a third axis to classical performance analysis.
\vspace{-0.2em}
\subsection{Power Regimes}\label{sec:2.1. power regime classification}
Let $X$ be a real-valued entity, and let $\hat{X}$ be its estimator.\footnote{Extension to complex-valued entities is straightforward.} The estimation error $e$ and \ac{MSE} are defined as\footnote{In \eqref{eq:mse_def}, the expectation in MSE definition is treated in a measure-theoretic sense, agnostic to frequentist or Bayesian formulations. Our framework unifies both views, focusing instead on the inner-product geometry of estimation.}
\begin{align}
    e &= \hat{X} - X, \label{eq:estimation_error_def} \\
    \mathrm{MSE} &= \mathbb{E}[e^2] = \langle e, e \rangle. \label{eq:mse_def}
\end{align}
While classical minimum \ac{MSE} (MMSE) estimation enforces orthogonality, $\mathbb{E}[\hat{X} \cdot e] = 0$, this property often fails in practice. We examine such departures through the lens of \textit{power dominance} by defining the power regimes.

\begin{definition}[Power Regimes]\label{def:power_regimes}
Let $\hat{X},X \in L^{2}(\Omega;\mathbb{R})$.
\begin{enumerate}
\item \textbf{Power-dominant (Forbidden zone):} $\mathbb{E}[\hat{X}^{2}] > \mathbb{E}[X^{2}]$.
\item \textbf{Power-conservative (Safe zone):} $\mathbb{E}[X^{2}] \geq \mathbb{E}[\hat{X}^{2}]$.
\item \textbf{Power-balance (Balance zone)\footnote{This \textit{power-balance regime} is an edge case of \textit{power-conservative regime}.}:} $\mathbb{E}[\hat{X}^{2}] = \mathbb{E}[X^{2}]$.
\end{enumerate}
\end{definition}
\subsection{Power-dominance Penalty}\label{sec:2.2. power domiance penalty law}
\begin{theorem}[Power-dominance Penalty Theorem]\label{thm:PD_Penalty_theorem}
Let $X,\hat{X} \in L^{2}(\Omega)$ and $e \triangleq \hat{X} - X$. If the estimator operates in the power-dominance regime and $\mathbb{E}[\hat{X}\cdot e] \neq 0$, then
\[
\boxed{\mathbb{E}[\hat{X}\cdot e] > \tfrac12\,\mathrm{MSE}}.
\]
\end{theorem}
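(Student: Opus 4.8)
The plan is to reduce the claim to a single algebraic identity relating the power gap $\mathbb{E}[\hat{X}^{2}]-\mathbb{E}[X^{2}]$ to the non-orthogonality term $\mathbb{E}[\hat{X}\cdot e]$ and the MSE, and then simply read off the inequality from the power-dominance hypothesis. The natural move is to eliminate $X$ in favour of $\hat{X}$ and $e$: since $e=\hat{X}-X$ we have $X=\hat{X}-e$, and everything should fall out of expanding $\mathbb{E}[X^{2}]=\mathbb{E}[(\hat{X}-e)^{2}]$ using bilinearity of the $L^{2}$ inner product $\langle\cdot,\cdot\rangle$.

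First I would expand
\[
\mathbb{E}[X^{2}] = \mathbb{E}[\hat{X}^{2}] - 2\,\mathbb{E}[\hat{X}\cdot e] + \mathbb{E}[e^{2}],
\]
which uses only linearity of the expectation and the definition $\mathrm{MSE}=\mathbb{E}[e^{2}]$ from \eqref{eq:mse_def}. Rearranging yields the key identity
\[
\mathbb{E}[\hat{X}^{2}] - \mathbb{E}[X^{2}] = 2\,\mathbb{E}[\hat{X}\cdot e] - \mathrm{MSE},
\]
which is essentially the entire content of the theorem: it states that the power gap equals twice the non-orthogonality term minus the MSE. I would then invoke Definition~\ref{def:power_regimes}, namely $\mathbb{E}[\hat{X}^{2}]>\mathbb{E}[X^{2}]$, so the left-hand side is strictly positive. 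Substituting gives $2\,\mathbb{E}[\hat{X}\cdot e]-\mathrm{MSE}>0$, i.e.\ $\mathbb{E}[\hat{X}\cdot e]>\tfrac12\,\mathrm{MSE}$, exactly the boxed claim.

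I do not expect a genuine analytic obstacle here: the $L^{2}$ assumption guarantees all three expectations are finite and that the cross term is well defined via Cauchy--Schwarz, so the argument collapses to the one-line identity above. The only subtle point worth flagging is that the stated side-condition $\mathbb{E}[\hat{X}\cdot e]\neq 0$ is in fact redundant for the conclusion: power dominance already forces $\mathbb{E}[\hat{X}\cdot e]>\tfrac12\,\mathrm{MSE}\ge 0$, and the degenerate case $\mathrm{MSE}=0$ is excluded since it would require $\hat{X}=X$ almost surely and thereby collapse the strict power gap. Accordingly, I would either omit that hypothesis or retain it purely to emphasise the contrast with the orthogonal MMSE regime, where $\mathbb{E}[\hat{X}\cdot e]=0$ and the identity instead pins the estimator to the power-balance boundary.
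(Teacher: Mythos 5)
Your proof is correct and essentially the same as the paper's: your key identity $\mathbb{E}[\hat{X}\cdot e]=\tfrac12\,\mathrm{MSE}+\tfrac12\bigl(\mathbb{E}[\hat{X}^{2}]-\mathbb{E}[X^{2}]\bigr)$ is precisely the paper's \eqref{eq:pdpenaltydecomp}, which the paper obtains from the abstract inner-product identity $\langle A, A-B\rangle=\tfrac12\bigl(\|A\|^{2}-\|B\|^{2}+\|A-B\|^{2}\bigr)$ rather than by your equivalent direct expansion of $\mathbb{E}[(\hat{X}-e)^{2}]$. Your side observation that the hypothesis $\mathbb{E}[\hat{X}\cdot e]\neq 0$ is redundant is also correct (the strict power gap alone yields $\mathbb{E}[\hat{X}\cdot e]>\tfrac12\,\mathrm{MSE}\geq 0$, and $\mathrm{MSE}=0$ would force $\hat{X}=X$ a.s., contradicting power dominance), and is arguably tidier than the paper's appeal to ``non-degeneracy.''
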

\begin{proof}
Using the identity valid in any inner product space \cite{A15},
\begin{align}
&\langle A, A-B\rangle = \nonumber\\
&=\tfrac12\big(\|A\|^2 + \|B\|^2 - 2\langle A, B\rangle + \|A\|^2-\|B\|^2 \big),\nonumber\\
&=\tfrac12\big(\|A\|^2 - \|B\|^2 + \|A-B\|^2\big),
\end{align}
with $A=\hat X$, and $B=X$ as elements in the Hilbert space $L^2(\Omega)$, where $\langle ., .\rangle$ corresponds to expectation \cite{B2005}, we get,
\begin{align}
\langle \hat X, e\rangle
&= \tfrac12\,\mathrm{MSE}+ \tfrac12\big(\mathbb E[\hat X^2] - \mathbb E[X^2]\big). \label{eq:pdpenaltydecomp}
\end{align}

In Regime 1, with $\mathbb{E}[\hat{X}^{2}] > \mathbb{E}[X^{2}]$, from \eqref{eq:pdpenaltydecomp} and \cite{Bulusu2025_arXiv}, under non-degeneracy, we get, $\mathbb E[\hat X\cdot e] >\tfrac12\,\mathrm{MSE}$. The degeneracy condition $\mathbb{E}[\hat{X}\cdot e] \neq 0$ excludes three cases: (i) trivial $\hat{X} = 0$, (ii) optimal $\mathbb{E}[\hat{X}\cdot e] = 0$, and (iii) ideal $\hat{X} = X$.
\end{proof}
Therefore, the \textbf{Theorem~\ref{thm:PD_Penalty_theorem}} indicates that every power-dominant estimator that is operating within an inner space geometry is fundamentally flawed by design, as it incurs an irreducible penalty of at least half the total MSE due to estimator–error entanglement. Moreover, this represents a fundamental limitation in estimation theory, holding independently of assumptions on bias, noise, or model mismatch.

\begin{corollary}[Power-conservative Bound]\label{cor:Power_Conservative_Regime_Bound}
Under the same setup as Theorem~\ref{thm:PD_Penalty_theorem}, if $\mathbb{E}[\hat{X}^2] \leq \mathbb{E}[X^2]$ (Power-conservative regime), then
\[
\boxed{\mathbb{E}[\hat{X}\cdot e] \leq \tfrac12\,\mathrm{MSE}}.
\]
\end{corollary}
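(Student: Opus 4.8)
The plan is to exploit the same decomposition already derived for Theorem~\ref{thm:PD_Penalty_theorem}, since the corollary is simply the mirror image of that result. The key observation is that equation~\eqref{eq:pdpenaltydecomp} is an exact identity, valid regardless of which power regime the estimator occupies; it reads
\[
\langle \hat X, e\rangle = \tfrac12\,\mathrm{MSE} + \tfrac12\big(\mathbb{E}[\hat X^2] - \mathbb{E}[X^2]\big).
\]
The entire content of the bound is therefore carried by the sign of the residual term $\tfrac12(\mathbb{E}[\hat X^2] - \mathbb{E}[X^2])$, so the proof reduces to a sign analysis rather than any fresh computation.

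First I would invoke the identity~\eqref{eq:pdpenaltydecomp} directly, citing that it holds for arbitrary $X,\hat X \in L^2(\Omega)$. Second, I would impose the power-conservative hypothesis $\mathbb{E}[\hat X^2] \leq \mathbb{E}[X^2]$, which immediately gives $\tfrac12(\mathbb{E}[\hat X^2] - \mathbb{E}[X^2]) \leq 0$. Substituting this inequality into the identity yields
\[
\mathbb{E}[\hat X \cdot e] = \langle \hat X, e\rangle \leq \tfrac12\,\mathrm{MSE},
\]
which is exactly the claimed bound. The balance-zone edge case $\mathbb{E}[\hat X^2] = \mathbb{E}[X^2]$ saturates the inequality, so the bound holds with equality there, consistent with the classical orthogonality condition $\mathbb{E}[\hat X \cdot e] = 0$ only when the MSE term also vanishes appropriately.

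The main thing to be careful about is \emph{not} an obstacle so much as a matter of bookkeeping: whether the inequality should be strict or non-strict, and how the non-degeneracy assumption carried over from Theorem~\ref{thm:PD_Penalty_theorem} interacts with it. Because the corollary states ``$\leq$'' rather than ``$<$,'' the argument is actually cleaner than the theorem---I do not need to separate out the trivial, optimal, and ideal degenerate cases, since equality is permitted. I would simply note that the power-conservative regime forces the residual term to be non-positive, so the half-MSE quantity acts as an upper rather than a lower bound, and the edge of the safe zone (the balance regime) is precisely where the two bounds meet. This symmetry---Theorem~\ref{thm:PD_Penalty_theorem} and Corollary~\ref{cor:Power_Conservative_Regime_Bound} being two faces of the single identity~\eqref{eq:pdpenaltydecomp}, partitioned by the sign of the power gap---is the conceptual point worth emphasizing, and it requires essentially no additional machinery beyond what the theorem's proof already established.
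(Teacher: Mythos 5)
Your proposal is correct and follows exactly the paper's own argument: both invoke the identity \eqref{eq:pdpenaltydecomp} and observe that the power-conservative hypothesis makes the term $\tfrac12\big(\mathbb{E}[\hat X^2]-\mathbb{E}[X^2]\big)$ nonpositive, yielding $\mathbb{E}[\hat X\cdot e]\leq\tfrac12\,\mathrm{MSE}$. Your added remarks on the non-strict inequality and the balance-zone equality case are sound elaborations but do not change the route.
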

\begin{proof}
From \eqref{eq:pdpenaltydecomp}, if $\mathbb{E}[\hat{X}^2] \leq \mathbb{E}[X^2]$, the second term is nonpositive and hence $\mathbb{E}[\hat{X}e] \leq \tfrac12\,\mathrm{MSE}$.
\end{proof}
\noindent
The \textbf{Theorem~\ref{thm:PD_Penalty_theorem}} and \textbf{Corollary~\ref{cor:Power_Conservative_Regime_Bound}} establish that the \textit{mean power of the estimate relative to the true value} is a \textbf{critical diagnostic} for the structural validity of an estimator. This shifts the focus from the classical bias-variance trade-off towards the `bias-variance-power triad' by adding a deeper criterion: \textit{The mean power of the estimate must not exceed that of the signal if orthogonality and low-error performance are to coexist.}
\section{From Static Diagnostic to Dynamic Design Philosophy}\label{sec:static diagnostic to dynamic design philosophy}
Based on \textbf{Definition~\ref{def:power_regimes}}, we introduce a `safe-zone law' through \textbf{Corollary~\ref{cor:Scaling_and_safe_zone_law}} in Section~\ref{sec:3.1. Safe Zone Law}. This law reframes \textit{power-dominance} from a static post-analysis check into a `\textit{dynamic design principle},' which we present in Section~\ref{sec:3.2. Safe-Zone Dynamics}.
\vspace{-0.5em}
\subsection{Safe-zone Law}\label{sec:3.1. Safe Zone Law}
\begin{corollary}[Scaling Reinterpretation and Safe-zone Law]\label{cor:Scaling_and_safe_zone_law}
Under the same setup as Theorem~\ref{thm:PD_Penalty_theorem}, let $Z\in L^{2}$ be any candidate and define $\hat X(t)=tZ$ with error $e(t)=\hat X(t)-X$.
Then the mean-squared error
\begin{equation}
\begin{aligned}
\mathrm{MSE}(t)=\mathbb{E}\!\left[(tZ-X)^{2}\right]
= t^{2}\mathbb{E}[Z^{2}] - 2t\,\mathbb{E}[XZ] + \mathbb{E}[X^{2}]
\label{eq:mse_quadratic}
\end{aligned}
\end{equation}
is minimized at
\begin{equation}
\begin{aligned}
t^{\star}=\frac{\mathbb{E}[XZ]}{\mathbb{E}[Z^{2}]}. \label{eq:tstar}
\end{aligned}
\end{equation}
At $t^{\star}$ the following hold:
\begin{align}
&\text{(Orthogonality)} && \mathbb{E}\!\left[\hat X(t^{\star})\,e(t^{\star})\right]=0, \label{eq:ortho}\\[2pt]
&\text{(Power conservation)} && 
\mathbb{E}\!\left[\hat X(t^{\star})^{2}\right]
=\frac{\mathbb{E}[XZ]^{2}}{\mathbb{E}[Z^{2}]} \ \leq\ \mathbb{E}[X^{2}], \label{eq:power_conserve}
\end{align}
with equality in \eqref{eq:power_conserve} if and only if $Z$ is collinear with $X$.
Consequently, the {MMSE} operating point always lies in the \textit{safe} or \textit{balance regime} $\mathbb{E}[\hat X^{2}]\leq \mathbb{E}[X^{2}]$.
\end{corollary}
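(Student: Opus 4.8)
The plan is to treat $\mathrm{MSE}(t)$ as a one-dimensional quadratic in the scalar $t$ and exploit its strict convexity. First I would observe that the leading coefficient $\mathbb{E}[Z^2]$ is strictly positive for any non-degenerate candidate ($Z\neq 0$ in $L^2$), so the parabola in \eqref{eq:mse_quadratic} opens upward and admits a unique global minimizer. Differentiating in $t$ and setting $\tfrac{d}{dt}\mathrm{MSE}(t)=2t\,\mathbb{E}[Z^2]-2\,\mathbb{E}[XZ]=0$ (equivalently, completing the square) immediately yields $t^{\star}=\mathbb{E}[XZ]/\mathbb{E}[Z^2]$, establishing \eqref{eq:tstar}.

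Next I would verify the orthogonality claim \eqref{eq:ortho} by direct substitution. With $\hat X(t^{\star})=t^{\star}Z$ and $e(t^{\star})=t^{\star}Z-X$, the cross term expands as $\mathbb{E}[\hat X(t^{\star})e(t^{\star})]=(t^{\star})^{2}\mathbb{E}[Z^{2}]-t^{\star}\mathbb{E}[XZ]$; inserting the value of $t^{\star}$ makes the two terms cancel exactly, giving zero. This is just the Hilbert-space projection theorem in disguise: at the MSE-optimal scaling the fitted estimate is orthogonal to the residual, which also recovers the classical MMSE condition $\mathbb{E}[\hat X\cdot e]=0$ invoked in Theorem~\ref{thm:PD_Penalty_theorem}.

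For the power-conservation bound \eqref{eq:power_conserve}, I would compute $\mathbb{E}[\hat X(t^{\star})^{2}]=(t^{\star})^{2}\mathbb{E}[Z^{2}]=\mathbb{E}[XZ]^{2}/\mathbb{E}[Z^{2}]$ and then recognize that the target inequality $\mathbb{E}[XZ]^{2}/\mathbb{E}[Z^{2}]\leq\mathbb{E}[X^{2}]$ is precisely the Cauchy--Schwarz inequality $\mathbb{E}[XZ]^{2}\leq\mathbb{E}[X^{2}]\,\mathbb{E}[Z^{2}]$ rearranged. The equality case of Cauchy--Schwarz then supplies the stated ``if and only if'': equality holds exactly when $X$ and $Z$ are linearly dependent, i.e. collinear. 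Chaining the three facts shows that $t^{\star}$ lands in the power-conservative regime, or on its boundary (power-balance) at collinearity, which is the safe-zone conclusion and ties directly back to Corollary~\ref{cor:Power_Conservative_Regime_Bound}.

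The main obstacle here is conceptual rather than computational: the key insight is that the whole safe-zone phenomenon collapses to Cauchy--Schwarz evaluated at the projection coefficient $t^{\star}$, so the real work is in \emph{recognizing} this reduction and in correctly pinning down the collinearity equality condition. The only technical care needed is to handle the degenerate boundary $Z=0$ (where $t^{\star}$ is undefined and the statement is vacuous) and to confirm that the critical point is a minimum via the positive second derivative $2\,\mathbb{E}[Z^{2}]>0$.
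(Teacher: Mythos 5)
Your proposal is correct and follows essentially the same route as the paper's abridged proof: minimize the quadratic \eqref{eq:mse_quadratic} to get \eqref{eq:tstar}, verify \eqref{eq:ortho} by substituting $t^{\star}$ into $\mathbb{E}[\hat X(t)\,e(t)] = t^{2}\mathbb{E}[Z^{2}] - t\,\mathbb{E}[XZ]$, and obtain \eqref{eq:power_conserve} from Cauchy--Schwarz with its equality case giving collinearity. Your added care about the degenerate case $Z=0$ and the explicit second-derivative check are minor refinements the paper leaves implicit, not a different argument.
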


\begin{proof}[Abridged proof]
The quadratic form \eqref{eq:mse_quadratic} is minimized at \eqref{eq:tstar}.
Evaluating $\mathbb{E}[\hat X(t)\,e(t)] = t^{2}\mathbb{E}[Z^{2}] - t\,\mathbb{E}[XZ]$ at $t=t^{\star}$ gives \eqref{eq:ortho}.
For \eqref{eq:power_conserve}, compute $\mathbb{E}[\hat X(t^{\star})^{2}] = \mathbb{E}[XZ]^{2}/\mathbb{E}[Z^{2}]$ and apply Cauchy-Schwarz:
$\mathbb{E}[XZ]^{2} \leq \mathbb{E}[X^{2}]\,\mathbb{E}[Z^{2}]$, with equality if and only if $Z$ is collinear with $X$.
This yields $\mathbb{E}[\hat X(t^{\star})^{2}] \leq \mathbb{E}[X^{2}]$.
\end{proof}
\vspace{-0.5em}
\subsection{From Safe-Zone Dynamics to a Scaling–Control Design Philosophy}\label{sec:3.2. Safe-Zone Dynamics}
As per `safe-zone law' (\textbf{Corollary~\ref{cor:Scaling_and_safe_zone_law}}), the progression of the scaling factor \(t\) in an estimator provides a clear, dynamic picture of \ac{MSE} optimization. Thus, the `power–balance line' (i.e. \textit{balance regime}) emerges as a critical operational boundary. Pushing the estimator’s power beyond it is counterproductive to achieving MMSE.\footnote{This conclusion holds for any general scaling function \(f(t)\) provided \(f(t)\) is well-behaved and has a non-zero first derivative.}

To the best of our knowledge, prior work has not identified power dominance as a standalone diagnostic principle or linked it explicitly to a universal operating boundary. The `safe–zone law' formalizes this boundary, showing that the {MMSE} optimum always lies in the \textit{power–conservative regime}, while the power dominance penalty law quantifies the inevitable performance loss when operating in the \textit{power–dominant regime}. Together, these laws establish that estimation performance is governed not only by statistical efficiency but also by \textit{structural energy alignment}. This connection moves the framework from theoretical characterization into a \textit{practical guiding philosophy} that is model-agnostic (parametric or non-parametric, Bayesian or frequentist, data-driven or learning-based) and domain-agnostic (signal processing, statistics, \ac{ML}, control theory, and beyond).

Therefore, estimator design is no longer about passively ``finding the right bias–variance balance,'' but about `{actively controlling the path of scaling}' so that the estimate converges into the safe/balance zone as quickly and accurately as possible. This reframes the problem into a \textbf{scaling–control and path optimization task} with three key well-known objectives:
\begin{enumerate}
    \item \textbf{Speed of Convergence}: Reaching $t^\star$ rapidly, without unnecessary delay, akin to acceleration methods in optimization such as Nesterov schemes \cite{N83}.
    \item \textbf{Overshoot Avoidance}: Preventing overshoot into the forbidden zone, introducing a convergence–stability trade–off analogous to gain tuning in control systems.
    \item \textbf{Dynamic Adaptation}: Tracking a moving $t^\star$ in adaptive or time–varying scenarios, ensuring the estimator remains in the safe-zone despite environmental changes.
\end{enumerate}

This unified perspective yields a simple yet powerful design rule:
\begin{quote}
    \textit{Design an estimator whose scaling converges to the safe–zone optimum with maximum speed and minimum overshoot.}
\end{quote}

Therefore, by reframing an estimation problem as a scaling control and path optimization task, we use language that is easy for experts in optimization and control to understand. This can encourage, in the future, better and more effective ways to investigate long-standing estimation problems. 
\section{Safe-zone Maps for Estimation Theory}\label{sec:safe zone maps}
This Section introduces a new geometric framework for visualizing and diagnosing the structural integrity of estimators. By mapping key performance metrics to a 2-D plane, we provide a universal diagnostic tool analogous to the \ac{ROC} curve in detection theory \cite{F05} and a powerful optimization-based interpretation. This visual representation in Fig~\ref{fig:safe_zone_plots} provides a universal checkpoint for estimator design, immediately flagging any estimator that falls into the structurally compromised \textit{forbidden zone}. In both plots, the \textit{safe-zone} and the \textit{forbidden zone} are shown in green and red colors. 
\vspace{-1em}
\subsection{ROC-like Plot for Estimation Theory}\label{sec:4.1. ROC-like plot}
In stark contrast to detection theory, estimation theory lacks an analogue to the ROC curve. Fig.~\ref{fig:safe_zone_plot1} (i.e., Left map) provides a visual analogy to the ROC curve in detection theory, with distinct performance zones. It provides a high-level diagnostic and introduces a new geometric map based on a second-order diagnostic: the mean power of the estimate relative to the true signal.

Similar to the ROC curve's use for any binary classifier, the plot in Fig.~\ref{fig:safe_zone_plot1} provides a simple and universally applicable visual diagnostic for any estimator. The plot helps to immediately classify an estimator's fundamental behavior without needing a detailed bias-variance decomposition.

In Fig.~\ref{fig:safe_zone_plot1}, the plot's clear ``forbidden" zone provides a direct visual representation of a fundamental performance limitation. Any estimator that falls into this region is subject to an inescapable penalty on the estimator-error coupling term, making it structurally flawed. This is analogous to the ``below-diagonal" region in an ROC curve, which signals that a classifier is worse than random chance.

The plot in Fig.~\ref{fig:safe_zone_plot1} does not just evaluate; it guides. It tells a designer to ``stay in the safe zone." Thereby, the designer shifts the focus from a purely algebraic exercise to a practical, geometric control problem. The `power-balance line,' in particular, is the critical boundary to approach but not cross. Unlike the diagonal in a typical \ac{ROC} curve, which represents a random classifier, this line signifies a state of perfect power balance and is the theoretical limit of optimal performance.
\begin{figure}[t!]
  \centering
  \subfloat[Left map\label{fig:safe_zone_plot1}]
    {\includegraphics[width=0.45\columnwidth]{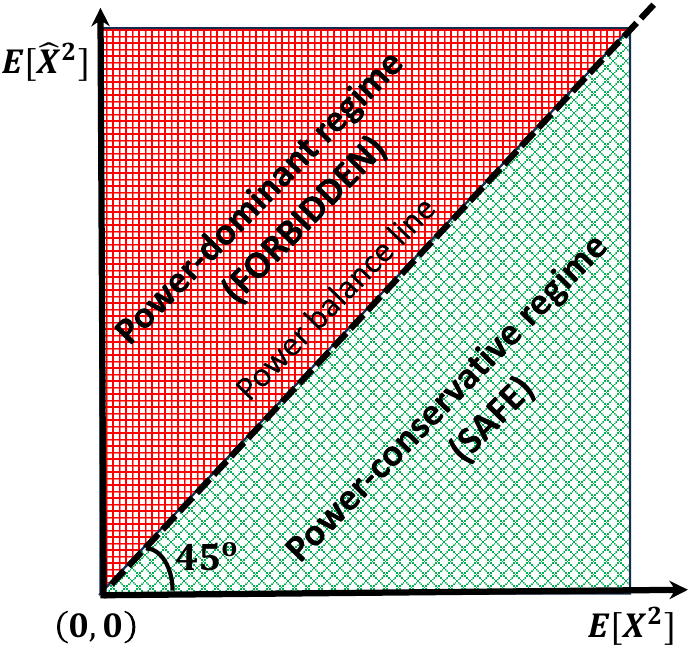}}\quad
  \subfloat[Right map\label{fig:safe_zone_plot2}]
    {\includegraphics[width=0.5\columnwidth]{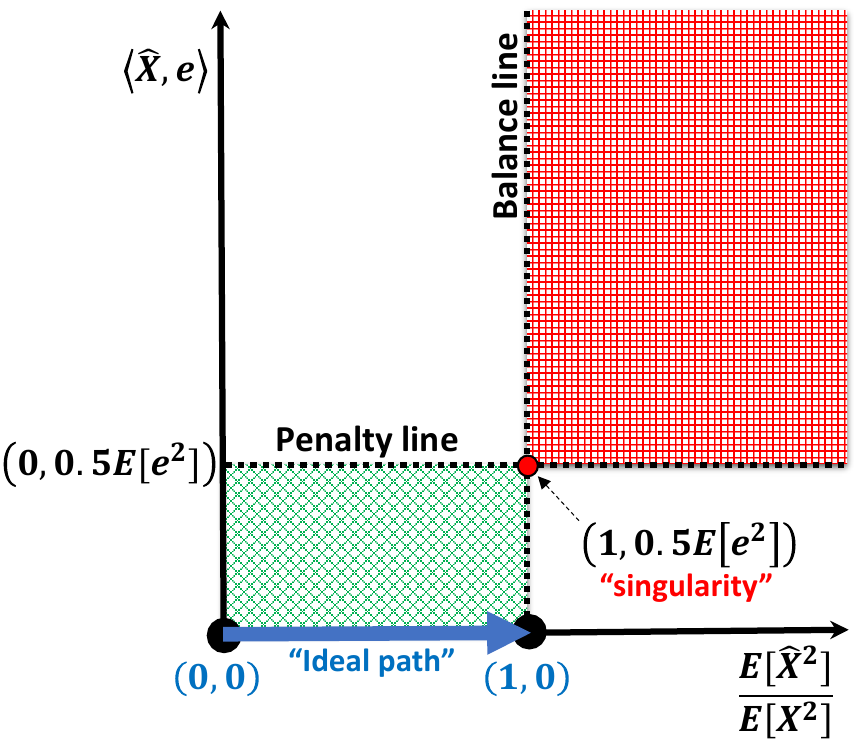}}
  \caption{Safe-zone Maps for Estimation Theory. (a) \textbf{Left~map}:~\ac{ROC}-like visualization of power regimes. (b) \textbf{Right~map}:~Bounded vs.\ unbounded optimization geometry.}
  \label{fig:safe_zone_plots}
  \vspace{-0.5em}
\end{figure}
\vspace{-1em}
\subsection{Optimization Perspective: Bounded vs. Unbounded Regimes}\label{sec:4.1. Optimization theory perspective}
The geometric framework can be further understood through the lens of optimization theory \cite{BT97,BSS06}, revealing why the \textit{power-dominant regime} is fundamentally untenable. We visualize the optimization landscape by plotting the coupling term $\mathbb{E}[\hat{X}\cdot e]$ against the power ratio $\mathbb{E}[\hat{X}^2]/\mathbb{E}[X^2]$ as shown in Fig.~\ref{fig:safe_zone_plot2} (i.e., Right map). The \textit{safe} and \textit{forbidden regimes} are shown in green and red colors, respectively. The `penalty line' and `power-balance line' act as boundaries. Their intersection is the `singularity' point $(0,0.5\text{MSE})$, shown as a red dot in Fig.~\ref{fig:safe_zone_plot2}. Crossing this singularity implies guaranteed energy wastage, futile pursuit, and structural sub-optimality. This reveals a profound distinction between the two regimes:
\begin{itemize}
    \item \textbf{The Bounded, Safe Regime}: The power-conservative region forms a well-posed, bounded optimization problem \cite{S25}. Within this region, the coupling term is constrained by $0 \leq \mathbb{E}[\hat{X}\cdot e] \leq 0.5\text{MSE}$. This well-defined solution space ensures that any optimization effort is meaningful, as the solution is guaranteed to be contained within this feasible region.

    \item\textbf{The Unbounded, Forbidden Regime}: Once the estimator crosses the power-balance line into the power-dominant zone, the optimization problem becomes ill-posed and unbounded. The coupling term suffers from a lower bound of $0.5\text{MSE}$ (the penalty floor), and the energy wastage grows without bound. This signifies a futile optimization exercise where computational or design effort is drained in a search for a solution that does not exist within the feasible estimation space \cite{GJ79, BTN01}.
\end{itemize}

Surprisingly, our framework directly parallels control theory. `\textit{Just as excessive control effort introduces harmful energy into a physical system, and causes instability \cite{Ogata10, FPEN10}, an estimator whose power dominates the true signal injects destructive energy into the error term, causing an inescapable performance penalty and structural pathology}.' This analogy provides a deeper physical intuition for the framework. It shows that a power-dominant estimator is pouring its ``excess energy" not into aligning with the signal. Instead, it is feeding the error term, creating the penalty. 

While the performance and convergence of classical optimization algorithms, such as Newton's method \cite{Nocedal2006}, are highly sensitive to a random initial guess, our formulation is fundamentally different. In Fig. \ref{fig:safe_zone_plot2}, the ideal path is clearly defined as a ray from $(0,0)$ to $(1,1)$ as shown in blue color along the $x$-axis. This journey begins from `maximal epistemic humility,' the `state of ignorance' $(0,0)$, where the estimate is zero and the baseline MSE is E[X²]. It moves toward `optimal accuracy,' the `state of knowledge' $(1,0)$, represented by the MMSE estimator. As shown in Fig. \ref{fig:safe_zone_plot2}, this specific starting point elevates the process from simple computation to a controlled, logical path toward knowledge (1,1), ensuring that integrity is maintained throughout.

\vspace{-1em}
\section{Relation to Prior work}\label{sec:relation to prior work}
While our framework touches upon established concepts in estimation theory, it provides a novel synthesis that distinguishes it from prior work. The classical bias-variance decomposition is a general lens for analyzing estimator performance. Our work does not replace this framework; rather, for a common subclass of estimators adjustable by a multiplicative scale, it supersedes it by providing a more direct, single-step optimization that implicitly handles both bias and variance. The concept of `power-dominance' as a formal diagnostic checkpoint is not explicitly stated in standard literature, which has primarily focused on optimality under bias/variance trade-offs, Bayesian vs. frequentist estimation, and filtering.

Our work provides a theoretical justification for practical methods such as L$^2$ regularization \cite{TA77}, which implicitly controls an estimator's power to prevent overfitting, but without the explicit theoretical link to the `power-dominance' penalty that our framework provides. Similarly, while MMSE estimators inherently operate in the \textit{power-conserving regime}, our work provides a new reason why this is so: their orthogonality principle is a direct consequence of avoiding the `power-dominance' penalty. Thus, our framework unifies these disparate ideas under a new, overarching principle of `energy geometry' in estimation theory. Our framework theoretically justifies, apparently, why standard ML practices succeed: normalization inherently constrains learning to the `safe zone.'
\vspace{-0.5em}
\section{Conclusion}\label{sec:conclusion}
This paper introduces a new framework for estimator design based on a second-order diagnostic: the mean power of the estimate relative to the true signal. We classify estimators into three power regimes and show that power-dominant estimators are structurally prone to an unavoidable MSE penalty. This finding offers a new perspective on estimator design, as traditional bias and variance metrics may not fully reveal an estimator's health. This result extends the classical bias–variance trade-off into a ‘bias–variance–power triad,’ positioning mean power as a structural diagnostic for estimator validity. The proposed framework is visually represented by two safe-zone maps. The first provides a diagnostic similar to an ROC curve, while the second shows that the safe-zone'' corresponds to a bounded, well-posed optimization problem, in stark contrast to the unbounded landscape of the forbidden zone.'' Our work thus reframes estimator design as a path optimization problem in resource allocation and dynamic power control, providing new avenues for research in regularization and adaptive estimation.
\vspace{-1em}
\section{Author Contributions}\label{sec:section_authorcontributions} 
Bulusu conceived the idea, developed the theory, and prepared the manuscript. Sillanpää verified all the proofs, provided crucial insights, and critically reviewed the manuscript.
\vspace{-2em}
\section{Acknowledgement}\label{sec:section_acknowledgement}
The authors acknowledge the use of Google's Gemini for improving the clarity and structure of the manuscript text. All theoretical derivations, analyses, and results were developed and verified manually by the authors.
\bibliographystyle{IEEEbib}
\bibliography{Template_arXiv}

\begin{thebibliography}{10}

\bibitem{K93}
S.~M. Kay,
\newblock {\em {Fundamentals of Statistical Signal Processing, Volume I: Estimation Theory}},
\newblock Prentice-Hall, Upper Saddle River, NJ, USA, 1993.

\bibitem{VT68}
H.~L. Van~Trees,
\newblock {\em {Detection, Estimation, and Modulation Theory, Part 1}},
\newblock John Wiley \& Sons, New York, NY, USA, 1st edition, 1968.

\bibitem{Hastie09}
T.~Hastie, R.~Tibshirani, and J.~Friedman,
\newblock {\em {The Elements of Statistical Learning}},
\newblock Springer, New York, NY, USA, 2nd edition, 2009.

\bibitem{D20}
P.~Domingos,
\newblock ``{A Unified Bias-Variance Decomposition and its Applications},''
\newblock in {\em Proc. 2000 Seventeenth Int. Conf. on Machine Learning (ICML)}. ACM, 2000, pp. 1--9.

\bibitem{A15}
S.~Axler,
\newblock {\em {Linear Algebra Done Right}},
\newblock Springer, New York, USA, 3rd edition, 2015.

\bibitem{B2005}
A.~Bobrowski,
\newblock {\em Functional Analysis for Probability and Stochastic Processes: An Introduction},
\newblock Cambridge University Press, Cambridge, UK, 2005.

\bibitem{Bulusu2025_arXiv}
S.~S. Krishna~Chaitanya Bulusu and Mikko~J. Sillanpää,
\newblock ``{Information-Theoretic Bounds and Task-Centric Learning Complexity for Real-World Dynamic Nonlinear Systems},''
\newblock {\em arXiv preprint}, vol. 2509, no. 06599, Sep. 2025.

\bibitem{N83}
Y.~Nesterov,
\newblock ``{A method of solving a convex programming problem with convergence rate \(O(1/k^2)\)},''
\newblock {\em Soviet Mathematics Doklady}, vol. 27, no. 2, pp. 372--376, 1983.

\bibitem{F05}
T.~Fawcett,
\newblock ``{An introduction to ROC analysis},''
\newblock {\em Pattern Recognition Letters}, vol. 27, no. 8, pp. 861--874, 2005.

\bibitem{BT97}
D.~Bertsimas and J.~N. Tsitsiklis,
\newblock {\em {Introduction to Linear Optimization}},
\newblock Athena Scientific, Belmont, MA, USA, 1997.

\bibitem{BSS06}
M.~S. Bazaraa, H.~D. Sherali, and C.~M. Shetty,
\newblock {\em {Nonlinear Programming: Theory and Algorithms}},
\newblock Wiley, Hoboken, NJ, USA, 3rd edition, 2006.

\bibitem{S25}
S.~Serovajsky,
\newblock {\em {Optimization 100 Examples}},
\newblock Routledge, Oxfordshire, UK, 1st edition, 2025.

\bibitem{GJ79}
M.~R. Garey and D.~S. Johnson,
\newblock {\em {Computers and Intractability: A Guide to the Theory of NP-Completeness}},
\newblock W. H. Freeman and Company, New York, USA, 1979.

\bibitem{BTN01}
A.~Ben-Tal and A.~Nemirovski,
\newblock {\em {Lectures on Modern Convex Optimization: Analysis, Algorithms, and Engineering Applications}},
\newblock SIAM, Philadelphia, USA, 2001.

\bibitem{Ogata10}
K.~Ogata,
\newblock {\em {Modern Control Engineering}},
\newblock Prentice-Hall, Upper Saddle River, NJ, USA, 5th edition, 2010.

\bibitem{FPEN10}
G.~F. Franklin, J.~D. Powell, and A.~Emami-Naeini,
\newblock {\em {Feedback Control of Dynamic Systems}},
\newblock Prentice-Hall, Upper Saddle River, NJ, USA, 6th edition, 2010.

\bibitem{Nocedal2006}
J.~Nocedal and S.~J. Wright,
\newblock {\em {Numerical Optimization}},
\newblock Springer, New York, NY, USA, 2nd edition, 2006.

\bibitem{TA77}
A.~N. Tikhonov and V.~Y. Arsenin,
\newblock {\em {Solutions of Ill-posed Problems}},
\newblock Winston \& Sons, Washington D.C., USA, 1977.

\end{thebibliography}

\end{document}